%
%
%
%
%

\documentclass[latexsym,amsfonts,lineno,preprint,pre,floats,aps,amsmath,amssymb,12]{article}
\usepackage[utf8]{inputenc}
\usepackage[T1]{fontenc} 
\linespread{1.05} 
\usepackage[english]{babel} 
\usepackage[hmarginratio=1:1,top=32mm,columnsep=20pt]{geometry} 
\usepackage{graphicx}
\usepackage{bm}
\usepackage{abstract} 
	
\usepackage{titlesec} 
\titleformat{\section}[block]{\large\scshape\centering}{\thesection.}{1em}{} 
\titleformat{\subsection}[block]{\large}{\thesubsection.}{1em}{} 
\usepackage{titling} 

\usepackage{charter} 
\usepackage{hyperref} 
\usepackage{amssymb}
\usepackage{amsmath}
\usepackage{breqn}
\usepackage{amsfonts}
\usepackage{adjustbox}
\usepackage{graphicx}
\usepackage{caption}
\usepackage{subfig}
\usepackage{float}
\usepackage{tabularx}
\usepackage{fancyhdr} 
\pagestyle{fancy} 
\usepackage{amsthm}
\usepackage{mathtools}
\usepackage{amssymb}
 
\usepackage{cases}

\newtheorem{lem}{Lemme}[section]
\newtheorem{thm}{Th\'eor\`eme}[section]


\setlength{\droptitle}{-4\baselineskip} 
\title{Study of COVID-19 anti-pandemic strategies by using optimal control}
\author{
	Fulgence Mansal $^{a,b}$, Mouhamadou A.M.T. Bald\'e\thanks{Support of the Non Linear Analysis, Geometry and Applications (NLAGA) Project}\ \ $^{b,c}$ and Alpha O. Bah $^d$\\
	$^a$ Universit\'e Catholique de l'Afrique de l'Ouest/ UUZ\\
	\href{mailto:fulgence.mansal1@gmail.com}{fulgence.mansal@ucad.edu.sn}\\
	$^b$ Laboratory of Mathematics of Decision and Numerical Analysis (LMDAN).\\	
	$^c$ Department of Mathematics of Decision(DMD)-FASEG.\\
	University of Cheikh Anta Diop.  \\ 
	BP 45087, 10700. Dakar, Senegal.\\
	\href{mailto:mouhamadouamt.balde@ucad.edu.sn}{mouhamadouamt.balde@ucad.edu.sn}\\
	$^d$ Laboratory of Applied Mathematics (LMA)-FST.\\
	 University of Cheikh Anta Diop.\\
	\href{mailto:alphaoumarbah6@yahoo.fr}{alphaoumarbah6@yahoo.fr}\\
}	
\date{}

\begin{document}
\maketitle

 \section{Introduction}
\noindent The COVID-19 pandemic has continued to evolve for more than six months around the world. Many countries have applied containment measures, then deconfinement. Currently, because of the resurgence of cases, some of these countries are proceeding with re-containment measures. \\

\noindent The health and social distancing measures are not always respected by the populations. Among the confirmed cases, there are caregivers. That shows a security flaw in the quarantine procedures. There are many undetected cases in the people who favor the evolution of the pandemic. \\
Hopes are on the discovery of a vaccine. But in the meantime, it is useful to come up with strategies that allow us to manage the pandemic better.

\noindent Several recent works have used SIR / SEIR models and other types of nonlinear differential equations(\cite{Balde:2020}, \cite{Balde2:2020}, \cite{Magal}) to understand the evolution of the pandemic but also to predict its subsequent evolution. Other techniques are also used such as machine learning, stochastic (\cite{Balde:2020}, \cite{Balde2:2020}, \cite{Ndiayeetal:2020}, \cite{Ndiayeetal2:2020}), etc. There is some work dedicated to the application of optimal control to the pandemic.
Many authors has used optimal control to study some diseases like HIV \cite{Hem}, \cite{Urszula}. \\

\noindent We analyze an epidemiological differential equation model with the identification of its parameters and initial values, based upon reported case data from public health sources. 
The objective of this work is to develop control strategies to stem the evolution of the pandemic.\\

\noindent The paper is organized as follows. In section \ref{sec:model}, we present the model. In section \ref{sec:analysis}, a mathematical analysis of the model is performed. Then, in section \ref{sec:optcont}, we introduce an optimal control problem to study. Thus in section \ref{sec:numsim}, we show numerical results of the optimal control problem. We discuss the results in section \ref{sec:discu}. We explain the methods we use in this work in section \ref{app}. Finally, we give conclusions and perspectives in section \ref{cp}.    

\section{Model formulation}
\label{sec:model}
%
%
%
%
\begin{figure}[H]
    \begin{center}
   \includegraphics[width=0.8\linewidth]{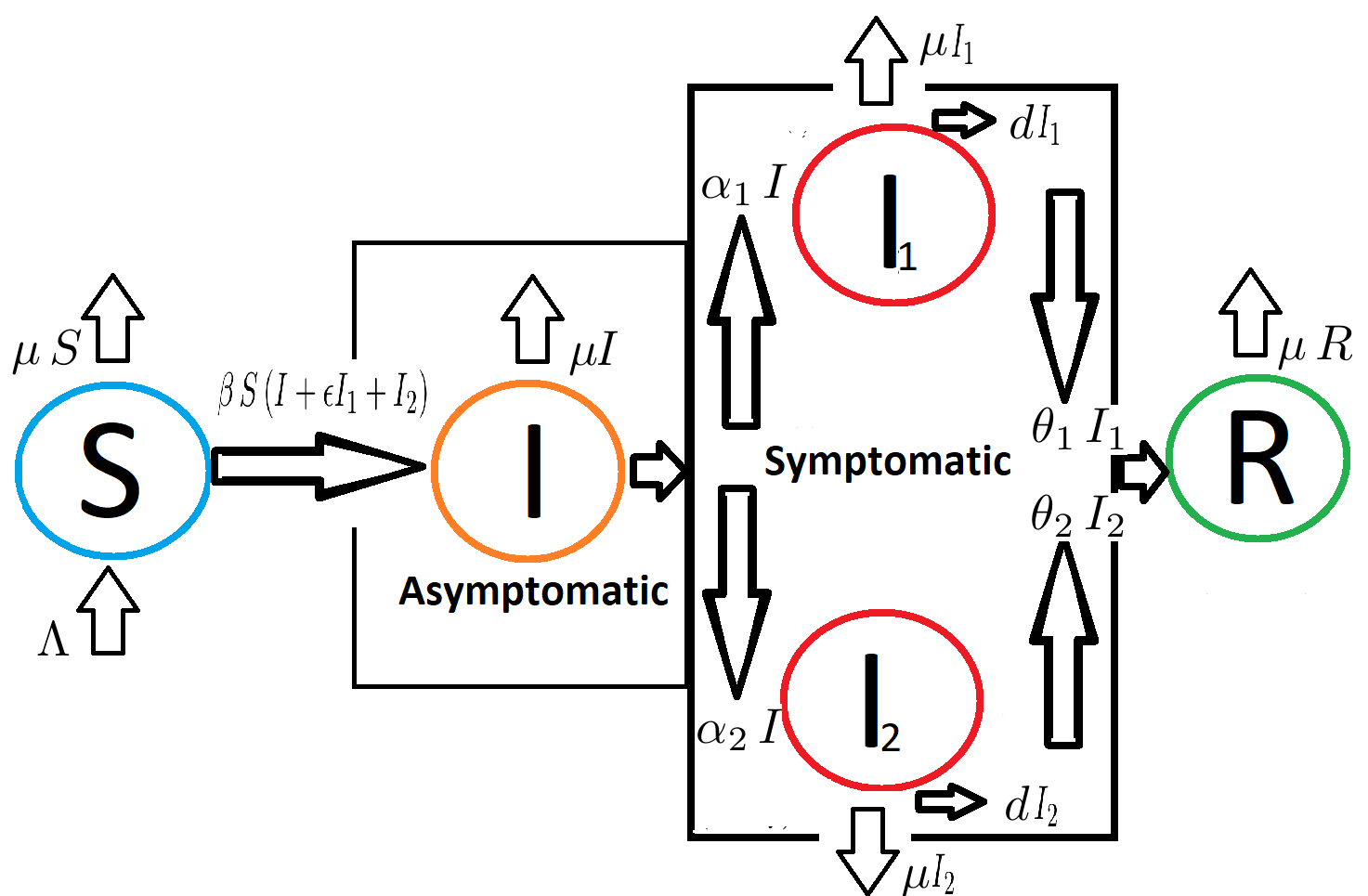}
  \caption{Compartments and flow chart of the model.}
    \label{fig:Figp.1}
    \end{center}
    \end{figure}

We consider the following differential equation model:
\begin{equation}
\label{model}
\left\{\begin{array}{lcl}
\dot S &=& \Lambda -\beta \,S\,(I+\epsilon I_{1}+I_{2})-\mu \,S,\\
\\
\dot I &=&\beta \,S\,(I+\epsilon I_{1}+I_{2})-(\mu+\alpha_1+\alpha_2)\,I,    \\
\\
\dot I_1&=& \alpha_1 \,I - (\mu+d+\theta_1)\,I_1,\\
\\
\dot I_2 &=&  \alpha_2 \,I - (\mu+d+\theta_2)\,I_2, \\
\\
\dot R &=&\theta_1\,I_1+ \theta_2\,I_2-\mu\,R.
\end{array}\right.
\end{equation}

\begin{table}[htp]
\caption{The model variables }
\begin{center}
\begin{tabular}{|c|c|c|c|c|c|c|c|}
\begin{tabularx}{15cm}{|c|X|}
\hline
Variable & Explanations for different classes  \\
\hline
$ S(t)$& Number of susceptible population at time $t$ \\
\hline
$I(t)$& Number of infected population at time $t$ (i.e. asymptomatic infectious)\\
\hline
$I_{1} (t)$ & Number of infected reported population  at time $t$  (i.e. symptomatic infectious with sever symptoms)\\
\hline
$I_{2}(t)$& Number of infected unreported population  at time $t$ (i.e., symptomatic infectious with mild symptoms) \\
\hline
$R(t)$& Number of recovered adults satisfying undetectable criteria at time $t$ \\
\hline
\end{tabularx}
\end{tabular}
\end{center}
\label{modelvar}
\end{table}%

\begin{table}[htp]
\caption{Parameters model formulation and their description}
\begin{center}
\begin{tabular}{|c|c|c|c|c|c|c|c|}
\hline
Parameter & Description  \\
\hline
$\alpha_1$ & Rate at which asymptomatic infectious become reported symptomatic \\
\hline
$\alpha_2$ & Rate at which asymptomatic infectious become unreported symptomatic \\
\hline
$\beta$ & Rate of transmission \\
\hline
$\mu$ & Natural death rate of the population \\
\hline
$ \Lambda $ &  Recruitment rate \\ 
\hline
$\theta_1$& Rate of recovery   from  reported population\\
\hline
$\theta_2$& Rate of recovery  unreported population \\
\hline
$d$ & Death rate of infected population due to COVID-19 coronavirus   \\
\hline

\end{tabular}
\end{center}
\label{modelparam}
\end{table}%
\noindent The system is supplemented by initial conditions
\begin{equation}
 S(t_0)=  S_0 > 0 \, , I(t_0)=  I_0 > 0 \, , I_1 (t_0)=  I_{10} = 0 \,\mbox{ and }  I_2 (t_0)=  I_{20} \ge  0,
 \end{equation}
with $t_{0}$ the starting time of the epidemic. Figure \ref{fig:Figp.1} depicts a flow diagram of the model. \\
In this model, the confirmed are automatically quarantined. Even in general, there may have some stages before confirmed individuals become quarantined. In this work, we consider these two compartments as one. Security failing during the quarantine or the process of quarantine can expose susceptible people to contamination. That is modeled by the term $\beta S \epsilon I_{1}$. Then a proportion $\epsilon$ of confirmed $I_{1}$ can interact with susceptible.    
\section{Mathematical analysis}
\label{sec:analysis}
One of the most critical concerns about any infectious disease is its ability to invade a population. The basic reproduction number,  $\mathcal R_0$ is a measure of the potential for disease spread in a population.   It represents the average number of secondary cases generated by an infected individual introduced into a susceptible population with no immunity to the disease in the absence of interventions to control the infection. If $ \mathcal R_0 < 1$, then on average, an infected individual produces less than one newly infected individual throughout his infection period. In this case, the infection may die out in the long run. Reversely, if $ \mathcal R_0 > 1$, each infected individual produces, on average more than one new infection. Hence the disease will be able to spread in a population. A significant value of $\mathcal R_0$ may indicate the possibility of a major epidemic. Using the next-generation operator technique described by  (\cite{Diekmann})  and subsequently analyzed by  (\cite{Van}),  we obtained the basic reproduction number. 
\subsection{Well–posedness of the model}
In this section, we prove that the system (\ref{model}) is epidemiologically meaningful. In other words, solutions of system (\ref{model}) with positive initial data remain positive for all time $t>0.$
Now, adding all equations in the differential system (\ref{model}) gives
\begin{equation*}
\dot N= \Lambda- \mu N- d_1 I_1-d_2 I_2\leq \Lambda- \mu N.
\end{equation*}
It then follows that, $\lim_{t\mapsto \infty}N(t)= \dfrac{\Lambda}{\mu}$ which implies that the trajectories of system (\ref{model})
are bounded. On the other hand, solving the differential inequality
\begin{equation}
\dot N \leq \Lambda- \mu N,
\end{equation}
so that, 
\begin{equation*}
0\leq N(t)\leq \dfrac{\Lambda}{\mu} +(N(0)-\dfrac{\Lambda}{\mu})e^{-\mu t}.
\end{equation*}
Thus, at $t \mapsto \infty, 0\leq N(t)\leq \dfrac{\Lambda}{\mu}.$ Therefore, all feasible solutions of model system (\ref{model}) 
enter the region:
\begin{equation}
\mathcal D=\big\{(S, I, I_1, I_2, R) \in \mathbb R_+^5 \big\}
\end{equation}
which is a positively invariant set of system (\ref{model}) .
Furthermore, the model (\ref{model}) is well-posed epidemiologically and
we will consider dynamic behavior of model (\ref{model}) on $ \mathcal D$.
%

\subsection{Equilibrium point}
To obtain the disease-free equilibrium, $I(t), I_1(t), I_2(t)$ and the right-hand-side of system
(\ref{model}) are set to zero. Then, the disease-free equilibrium will be given by
\begin{equation}
 E=\big(S_0, 0, 0, 0,0\big)=\big(\dfrac{\Lambda}{\mu},0,0,0,0\big)
\end{equation}
\subsection{Rate reproduction number}
By using the next-generation operator method on the system (\ref{model}), we establish the linear stability of $E_0$. Using the notation in \cite{Van}, the matrices F and V, for the new infection terms and the remaining transfer terms respectively, are given by (noting that $S_0=\dfrac{\Lambda}{\mu}$ at the DFE $E_0$)
$$\begin{array}{ll}F=\left(\begin{array}{ccc}
\beta S_0 & \epsilon\beta S_0 &  \beta S_0\\
\alpha_{1}& 0 & 0\\
\alpha_{2}& 0 & 0
\end{array}\right),\qquad V=\left(\begin{array}{cccc}
\mu+\alpha_1+\alpha_2 &0 & 0\\
0 & \mu+d+\theta_1  &0\\
0 &0 & \mu+d+\theta_2
\end{array}\right).
\end{array}$$
Thus,
\begin{equation}
 \mathcal R_0=\rho(FV^{-1})= \dfrac{\beta S_0}{\mu+\alpha_1+\alpha_2}= \dfrac{\beta \Lambda}{\mu(\mu+\alpha_1+\alpha_2)}.
\end{equation}
The following result follows from Theorem 2 of \cite{Van}
\begin{lem}
 The DFE of the Covid 19-only model (\ref{model}),  is locally asymptotically stable (LAS) if
 $ \mathcal R_0 < 1$, and unstable if $ \mathcal R_0 > 1$.
\end{lem}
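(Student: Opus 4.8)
The plan is to apply the next-generation theorem of van den Driessche and Watmough, i.e.\ Theorem 2 of \cite{Van}, essentially verbatim, since the heavy lifting---the construction of the matrices $F$ and $V$ and the computation of $\mathcal R_0=\rho(FV^{-1})$---has already been carried out at the disease-free equilibrium $E_0=(\Lambda/\mu,0,0,0,0)$. First I would make explicit the decomposition that the theorem requires: the infected compartments are $I,I_1,I_2$ and the uninfected ones are $S,R$, and the dynamics of the infected block split as $\dot x = \mathcal F(x)-\mathcal V(x)$ with new-infection vector $\mathcal F=\big(\beta S(I+\epsilon I_1+I_2),\,0,\,0\big)^{T}$ and transition vector $\mathcal V$ collecting the outflow terms $(\mu+\alpha_1+\alpha_2)I$, $(\mu+d+\theta_1)I_1-\alpha_1 I$ and $(\mu+d+\theta_2)I_2-\alpha_2 I$.

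Next I would check hypotheses (A1)--(A5) of \cite{Van}. Conditions (A1)--(A4) hold by inspection: on the nonnegative orthant all components of $\mathcal F$ and of the inflow/outflow parts of $\mathcal V$ are nonnegative, no transfer leaves an empty compartment, new infections appear only in the infected block, and $\mathcal F$ together with the imported part of $\mathcal V$ vanishes on the disease-free subspace $\{I=I_1=I_2=0\}$. Linearizing the infected block at $E_0$ reproduces exactly the stated $F$ (entrywise nonnegative) and $V$ (here diagonal with strictly positive diagonal entries $\mu+\alpha_1+\alpha_2$, $\mu+d+\theta_1$, $\mu+d+\theta_2$), so $V$ is a nonsingular $M$-matrix, as demanded.

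The only step requiring more than a sign check---and hence the point I would treat as the genuine, if mild, obstacle---is condition (A5): the disease-free subsystem, obtained by freezing $I=I_1=I_2=0$, must be locally asymptotically stable on its own. Here that subsystem reduces to $\dot S=\Lambda-\mu S$ and $\dot R=-\mu R$, whose Jacobian has the double eigenvalue $-\mu<0$; thus (A5) is satisfied. With all five assumptions confirmed, Theorem 2 of \cite{Van} yields directly the threshold dichotomy, namely that $E_0$ is locally asymptotically stable when $\mathcal R_0=\rho(FV^{-1})<1$ and unstable when $\mathcal R_0>1$, with the explicit value $\mathcal R_0=\beta\Lambda/\big(\mu(\mu+\alpha_1+\alpha_2)\big)$ already obtained above. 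This completes the argument.
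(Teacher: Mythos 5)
Your overall route is the same as the paper's: the paper gives no argument beyond the citation of Theorem~2 of \cite{Van}, and you correctly identify that an actual proof consists of exhibiting the splitting $\dot x=\mathcal F-\mathcal V$ on the infected block $(I,I_1,I_2)$ and checking hypotheses (A1)--(A5); your verification of (A5) via the reduced system $\dot S=\Lambda-\mu S$, $\dot R=-\mu R$ is fine (strictly, (A5) also asks that the eigenvalues of $-V$ be negative, which is immediate).

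There is, however, a concrete inconsistency in the middle step. The splitting you write down, $\mathcal F=\bigl(\beta S(I+\epsilon I_1+I_2),0,0\bigr)^{T}$ with $\mathcal V$ containing the inflow terms $-\alpha_1 I$ and $-\alpha_2 I$, does \emph{not} linearize to ``exactly the stated $F$'' nor to a diagonal $V$. Your $F$ has only its first row nonzero, whereas the paper's $F$ carries $\alpha_1,\alpha_2$ in rows two and three; and your $V$ is the lower--triangular matrix with entries $\mu+\alpha_1+\alpha_2$, $\mu+d+\theta_1$, $\mu+d+\theta_2$ on the diagonal and $-\alpha_1$, $-\alpha_2$ in the first column. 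For that pair one gets
$\rho(FV^{-1})=\dfrac{\beta S_0}{\mu+\alpha_1+\alpha_2}\Bigl(1+\dfrac{\epsilon\alpha_1}{\mu+d+\theta_1}+\dfrac{\alpha_2}{\mu+d+\theta_2}\Bigr)$,
which is not the quantity $\beta S_0/(\mu+\alpha_1+\alpha_2)$ displayed in the paper (nor is the latter the spectral radius of the paper's own $F V^{-1}$, whose nonzero eigenvalues solve a quadratic). Since the sign of $\rho(FV^{-1})-1$ coincides with the sign of the spectral abscissa of $F-V$ for any admissible splitting, your argument still delivers the qualitative dichotomy ``LAS below threshold, unstable above,'' but the identification of that threshold with the formula $\mathcal R_0=\beta\Lambda/\bigl(\mu(\mu+\alpha_1+\alpha_2)\bigr)$ asserted in the lemma does not follow from the computation as you have written it; you would need either to adopt the paper's splitting and compute its spectral radius honestly, or to keep your standard splitting and state the threshold with the formula it actually produces.
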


The threshold quantity of $\mathcal R_0$ is the reproduction number for COVID-19. It measures the average number of new Covid-19 infections generated by a single COVID-19 infected individual in a population where a certain fraction of infected individuals is treated.
\subsection{Global stability of the disease-free equilibrium}
We now turn to the global stability of the disease-free equilibrium $E_0$. We prove that the disease-free equilibrium $E_0$ is globally
asymptotically stable under a certain threshold condition. To this
aim, we use a result obtained by Kamgang and Sallet  \cite{kamg}.

Let $x_1=(S, R)$ and $x_2=(I_2, I_1, I)$. We express the sub-system
\begin{equation}
 \dot x_1=A_1(x_1,0).(x-x_1^*),  \quad \text{as}
\end{equation}

\begin{equation*}
\left\{\begin{array}{lll}
\dot S &=& \Lambda -\mu \,S,\\
\dot R &=&-\mu\,R.
\end{array}\right.
\end{equation*}
It is a linear system which  is globally asymptotically stable at the equilibrium $E_0$, corresponding to the DFE
where the hypotheses $\mathbb H_1$ and $\mathbb H_2$ in \cite{kamg} are satisfied.

The matrix $\mathbb A_2(x)$ is given by
$$\begin{array}{ll}\mathbb A_2(x)=\left(\begin{array}{ccc}
-(\mu+d_2+\theta_2)  &0 & \alpha_2 \\
\\
0& - (\mu+d_1+\theta_1)  &\alpha_1 \\
\\
0&0&\beta S_0- (\mu+\alpha_1+\alpha_2) 
\end{array}\right).
\end{array}$$
The eigenvalues of the sub-matrix:
$$\begin{array}{ll}\mathbb J_0=\left(\begin{array}{cc}
 - (\mu+d_1+\theta_1)  &\alpha_1 \\
\\
0&\beta S_0- (\mu+\alpha_1+\alpha_2) 
\end{array}\right).
\end{array}$$
Since $J_0$ is a matrix of dimension 2, necessaries conditions for $J_0$ to be stable is $tr(J_0)<1$ and $det (J_0)>0$.
Note that $tr(J_0)<0$ gives  $\beta S_0< \mu+\alpha_1+\alpha_2.$ Also, the condition $det (J_0)>0$ gives
\begin{equation}\label{alfa}
 \beta S_0\leq \mu+\alpha_1+\alpha_2.
\end{equation}
Note that the inequality (\ref{alfa}) corresponds to $\mathcal R_0 \leq 1$. This achieves the proof.

\noindent We have the following result about the stability of the disease-free equilibrium.
\begin{thm}
The disease-free equilibrium of system (\ref{model}) is globally asymptotically stable in $D$ whenever $\mathcal R_0 \leq 1$. This
implies the global asymptotic stability of the disease-free equilibrium on the nonnegative orthant $\mathbb R_5$, i.e., 
the disease naturally dies out.
\end{thm}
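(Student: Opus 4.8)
The plan is to invoke the global stability theorem of Kamgang and Sallet \cite{kamg}, following the decomposition already laid out above. The point of departure is the well-posedness result: the set $\mathcal{D}$ is positively invariant and bounded, and on it one has the sharp bound $S(t)\leq \Lambda/\mu = S_0$, since $\dot N \leq \Lambda-\mu N$ forces $N\leq \Lambda/\mu$ and hence $S\leq S_0$. First I would partition the state into the healthy variables $x_1=(S,R)$ and the infected variables $x_2=(I_2,I_1,I)$, and cast the system in the block form $\dot x_1 = A_1(x)(x_1-x_1^*)+A_{12}(x)x_2$ and $\dot x_2 = A_2(x)x_2$ demanded by the theorem, with $x_1^*=(S_0,0)$ the healthy coordinates of the DFE.

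Next I would check the two structural hypotheses $\mathbb H_1$ and $\mathbb H_2$. Setting $x_2=0$ collapses the first block to the linear cascade $\dot S=\Lambda-\mu S$, $\dot R=-\mu R$, whose unique equilibrium $x_1^*$ is globally asymptotically stable by direct integration, which is hypothesis $\mathbb H_2$. For the infected block I would verify that $A_2(x)$ is a Metzler matrix (its off-diagonal entries are each $0$, $\alpha_1$, or $\alpha_2$, all nonnegative) and that it meets the remaining structural requirements of \cite{kamg} on $\mathcal{D}$.

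The decisive step is to exhibit an upper-bound matrix $\bar A_2$ satisfying $A_2(x)\leq \bar A_2$ entrywise for every $x\in\mathcal{D}$, and to prove $\bar A_2$ is a stable (Hurwitz) Metzler matrix. Since $S$ enters $A_2(x)$ only through the transmission term $\beta S$ and $S\leq S_0$ on $\mathcal{D}$, the bound is attained at $S=S_0$, producing exactly the matrix $\mathbb A_2(x)$ displayed above. Being block upper triangular, its spectrum is controlled by the diagonal block $\mathbb J_0$; applying the trace and determinant criteria to this $2\times 2$ block yields the stability condition $\beta S_0\leq \mu+\alpha_1+\alpha_2$, which is precisely $\mathcal R_0\leq 1$.

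The main obstacle I anticipate is in this last step: one must confirm that the entrywise bound $A_2(x)\leq \bar A_2$ holds on all of $\mathcal{D}$, not merely near $E_0$, which is where the invariance estimate $S\leq S_0$ does the real work, and that the stability of $\bar A_2$ is equivalent to $\mathcal R_0\leq 1$ rather than a strictly stronger condition. Once $\bar A_2$ is certified stable under $\mathcal R_0\leq 1$, the Kamgang--Sallet theorem yields global asymptotic stability of $E_0$ on $\mathcal{D}$; and since every trajectory issuing from $\mathbb R_+^5$ enters $\mathcal{D}$, the conclusion propagates to the entire nonnegative orthant, so the disease dies out.
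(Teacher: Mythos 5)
Your plan follows the Kamgang--Sallet sketch that precedes the theorem, but that is not the route the paper's displayed proof actually takes: the proof given in the text abandons that machinery and instead uses the Lyapunov function $V=I$ together with LaSalle's invariance principle on the compact invariant set $\mathcal D$. Your alternative would be acceptable in principle, but as written it relies on — and inherits — an error in the displayed matrix $\mathbb A_2(x)$, precisely at the step you yourself identified as decisive.

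Here is the gap. Writing the infected block as $\dot x_2=A_2(x)\,x_2$ with $x_2=(I_2,I_1,I)$, the third row of $A_2(x)$ must reproduce $\dot I=\beta S(I+\epsilon I_1+I_2)-(\mu+\alpha_1+\alpha_2)I$, so it is $\bigl(\beta S,\ \epsilon\beta S,\ \beta S-(\mu+\alpha_1+\alpha_2)\bigr)$, not $\bigl(0,\ 0,\ \beta S_0-(\mu+\alpha_1+\alpha_2)\bigr)$. Hence the upper bound $\bar A_2$ obtained at $S=S_0$ is a full irreducible Metzler matrix, not block upper triangular, and its spectrum is not controlled by the $2\times 2$ block $\mathbb J_0$. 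Testing when $-\bar A_2$ is a nonsingular M-matrix (positivity of the leading principal minors) gives the actual stability condition
\[
\frac{\beta S_0}{\mu+\alpha_1+\alpha_2}\Bigl(1+\frac{\epsilon\,\alpha_1}{\mu+d+\theta_1}+\frac{\alpha_2}{\mu+d+\theta_2}\Bigr)\le 1,
\]
which is strictly stronger than $\beta S_0\le\mu+\alpha_1+\alpha_2$ whenever $\epsilon>0$ or $\alpha_2>0$; the extra factor accounts for the secondary transmission from $I_1$ and $I_2$ that the stated $\mathcal R_0$ omits. So the claim that stability of $\bar A_2$ is \emph{equivalent} to $\mathcal R_0\le 1$ fails, and your argument only proves global stability under the stronger displayed condition. (For what it is worth, the paper's own Lyapunov proof has the mirror-image defect: it computes $\dot V=\dot I$ as $\beta SI-(\mu+\alpha_1+\alpha_2)I$, silently dropping the $\beta S(\epsilon I_1+I_2)$ terms, so $\dot V\le 0$ is likewise not justified by $\mathcal R_0\le 1$ alone.) To close the gap you would either have to redo the bound with the correct third row and accept the stronger threshold, or show separately that the omitted terms can be absorbed — which they cannot be in general.
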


\begin{proof}

 We consider the Lyapunov function defined by $ V(S, I, I_1, I_{2})=I$ .   So, we have
 
 \begin{eqnarray*}
  \dot V &=& \dot I   \\ 
   &=& \beta SI - (\mu+ \alpha_1+\alpha_2)I   \\
   &=& I\big[\beta S - (\mu+ \alpha_1+\alpha_2) \big]  \\
   &=& I(\mu+ \alpha_1+\alpha_2)\big[ \dfrac{\mathcal R_0 S}{S_0} -1 \big]  \\
   &\leq & 0 .
 \end{eqnarray*}

\noindent Moreover $\dot V = 0$ if $I=0$ or $S=S_0$ and $\mathcal R_0=1$. Since we are in a positively invariant compact,
by LaSalle's invariance principle \cite{LaSalle}, the DFE is globally asymtotically stable in $\mathcal D$.

\end{proof}

\section{Optimal control in the epidemic model}
\label{sec:optcont}
Optimal control problems have generated a lot of interest from researchers all over the world. For instance (Imanov, 2011, see \cite{Imanov}) examined the application of the method of similar solutions in solving time optimal control problems with state constraints. Similarly, various techniques have been applied to study optimal control problems related to dynamical systems. However, we considered the aspect of optimal control to reduce the spread of COVID-19 disease through the combination of the aspects of the education campaign, quarantine, and treatment of infected individuals. This study intends to apply optimal control theory to minimize the spread disease by some control strategies and minimize the cost of applying controls in order to best combat the spread of COVID-19 disease.\\

\noindent Consider these two epidemic models with controls $\displaystyle u_{1}(t),\ u_{2}(t),\ u_{3}(t) $, death $D(t)$, recovered $R_{1}(t) $ from infected reported and recovered $R_{2}(t) $ from infected unreported, given by the following two  models:
\begin{itemize}
%

\item\textbf{Model 1}:
 \begin{equation}\label{controlmodel2}
		\left\{\begin{array}{lcl}
	\dot S &=& \Lambda -\beta \,S\,(I+\epsilon (1-u_{1}) I_{1}+I_{2}(1-u_{1}))-\mu \,S,\\
	\\
	\dot I &=&\beta \,S\,(I+\epsilon (1-u_{1}) I_{1}+I_{2}(1-u_{1}))-(\mu+\alpha_1 (1+u_{2})+\alpha_2 (1-u_{3}))\,I,    \\
	\\
	\dot I_1&=& \alpha_1 (1+u_{2})\,I - (\mu+d(1-u_{3})+\theta_1(1+u_{3}))\,I_1,\\
	\\
	\dot I_2 &=&  \alpha_2 (1-u_{3})\,I - (\mu+d(1-u_{3})+\theta_2 (1+u_{3}))\,I_2, \\
	\\
	\dot R_{1} &=&\theta_1 (1+u_{3})\,I_1-\mu\,R_{1},\\
	\\
	\dot R_{2} &=&\theta_2 (1+u_{3})\,I_2-\mu\,R_{2},\\
	\\
	\dot D &=&d (1-u_{3})\,(I_1+\,I_2).
	\end{array}\right.
\end{equation}
\item\textbf{Model 2}:
\begin{equation}\label{controlmodel3}
\left\{\begin{array}{lcl}
\dot S &=& \Lambda -\beta \,S\,(I+\epsilon (1-u_{1}) I_{1}+(1-u_{1})I_{2})-\mu \,S,\\
\\
\dot I &=&\beta \,S\,(I+\epsilon (1-u_{1}) I_{1}+(1-u_{1})I_{2})-(\mu+\alpha_1 (1+u_{2}+u_{3})+\alpha_2 (1-u_{2}-u_{3}))\,I,    \\
\\
\dot I_1&=& \alpha_1 (1+u_{2}+u_{3})\,I - (\mu+d(1-u_{3})+\theta_1(1+u_{2}+u_{3}))\,I_1,\\
\\
\dot I_2 &=&  \alpha_2 (1-u_{2}-u_{3})\,I - (\mu+d(1-u_{3})+\theta_2 (1+u_{2}+u_{3}))\,I_2, \\
\\
\dot R_{1} &=&\theta_1 (1+u_{2}+u_{3})\,I_1-\mu\,R_{1},\\
\\
\dot R_{2} &=&\theta_2 (1+u_{2}+u_{3})\,I_2-\mu\,R_{2},\\
\\
\dot D &=&d (1-u_{3})\,(I_1+\,I_2).
\end{array}\right.
\end{equation}
\noindent We consider, in this work, three controls: distancing control $u_{1}(t)$ , case finding control $u_{2}(t)$ , and case holding control $u_{3}(t)$ and $\displaystyle 0\leq\displaystyle u_{1}(t),\ u_{2}(t),\ u_{3}(t)\leq 1$.\\
\begin{itemize}
	\item The distancing control, $u_{1}(t)$: implies the effort of preventing susceptible individuals from becoming infectious individuals. The strategies, such as early detection of infectious individuals, isolation of infectious people, and health campaign and education, are related to $u_{1}(t)$. It reduces the risk of contamination of susceptible by the reported individuals during quarantine and treatment. It also reduces the contamination from unreported, since due to health campaigns and education, individuals who suspect they are infected may apply social distancing and self-quarantine.  
	\item The case-finding control, $u_{2}(t)$: represents the screening of high-risk exposed individuals in the first model and, additionally, the treatment of infected individuals in the second model. It increases the detection of infected individuals and also increases the recovered cases in the second model.
	\item The case holding control, $u_{3}(t)$: refers to the effort required to complete the treatment of infected individuals, such as activities used to ensure the regularity of drug intake until a lasting cure is attained and financial support by the government. 
	It increases the detection of cases in the second model, reduces unreported cases, increases the recovered cases, and reduces death cases in the two models.
\end{itemize}
\end{itemize} 

\subsection{ Modeling the optimal control problem}
In this subsection, we present the optimal control problem we intend to solve. Two strategies are proposed to analyze the spread of the viruses when some controls are applied.\\
%
\noindent Let's set $u(t)=(u_{1}(t),u_{2}(t),u_{3}(t))\in [0,1]^{3}$,  $x(t)=(S(t),I(t),I_{1}(t),I_{2}(t),R_{1}(t),R_{2}(t),D(t))$ and\\ $x_{0}=(S_{0},I_{0},I_{10},I_{20},R_{10},R_{20},D_{0})$.\\
\noindent Our objective functional to be minimized is as follows:
\begin{multline}
J(x,u)= \int_{0}^{T}  [ a_{1}I(t) + a_{2}I_{1}(t) + a_{3}I_{2}(t)+ a_{4}D(t) + \dfrac{a_{5}}{2}u_{1}^{2} (t) +\dfrac{a_{6}}{2}u_{2}^{2} (t) + \dfrac{a_{7}}{2}u_{3}^{2}  (t)]  dt
\label{functionnal}
\end{multline}
We assume that the relative intervention costs are nonlinear and take a quadratic form in the controls. The coefficients, $ a_{i}\ i=1\cdots 7$, are balancing factors according to the size and the importance of the objective functional. 
Thus, we seek optimal controls variables $u^{*}$ and states variables $x^{*}$ such that

\begin{align}
J(x^{*},u{*})&= \min_{u(t)\in [0,\ 1]^{3}} J(x,u)&\nonumber\\
 \textrm{subject to: }&\  x(t) \textrm{ satisfies the DE model \eqref{controlmodel2} or \eqref{controlmodel3}}&\\
 &x(0)=x_{0}&\nonumber
 \label{optprb}
\end{align}
\noindent By minimizing the functional, we want to reduce, at the same time, the infectious asymptomatic, the reported and unreported symptomatic, the death, and the controls. \\
Further, we will propose a constructed functional  $J(x,u)$. For the construction of that functional, refer to section \ref{app}.

\subsection{ Existence of an optimal control solution}
We analyze sufficient conditions for the existence of a solution to the optimal control problem (\eqref{optprb}). Using a result in Fleming and Rishel (\cite{Fleming} ) and Hattaf and Yousfi (\cite{Hattaf}), the existence of the optimal control can be obtained.
\begin{thm}
	There exists an optimal control $u^{*}$ and corresponding state $x^{*}$ to the problem \eqref{optprb}. 
\end{thm}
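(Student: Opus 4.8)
The plan is to verify the hypotheses of the existence theorem of Fleming and Rishel \cite{Fleming} (restated in the epidemic setting by Hattaf and Yousfi \cite{Hattaf}), which guarantees an optimal pair $(x^{*},u^{*})$ once one checks: (i) the set of admissible control–state pairs is nonempty; (ii) the control set $U=[0,1]^{3}$ is closed and convex; (iii) the right-hand side of the state system grows at most linearly in the state and control; (iv) the running cost is convex in the control; and (v) the running cost is coercive in the control, i.e. bounded below by $c_{1}|u|^{\rho}-c_{2}$ for some $c_{1}>0$ and $\rho>1$. The a priori bounds obtained in the well-posedness section are the ingredient that makes (i) and (iii) go through.

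First I would establish that admissible pairs exist. For any measurable control $u:[0,T]\to[0,1]^{3}$, the right-hand sides of \eqref{controlmodel2} and \eqref{controlmodel3} are polynomial in the state, hence locally Lipschitz, so by the Carath\'eodory existence theorem a unique local solution $x$ issues from $x_{0}$. Summing the compartments exactly as in the well-posedness analysis shows the epidemiological variables remain in the invariant region $\mathcal D$ (so $N(t)\le \Lambda/\mu$), while the accumulated death $D$ and the recovered $R_{1},R_{2}$ satisfy $\dot D,\dot R_{i}\ge 0$ and are bounded on the finite horizon $[0,T]$ by the bounds already available for $I_{1},I_{2}$. Thus every solution is global on $[0,T]$ and the admissible set is nonempty. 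Condition (ii) is immediate, since $U=[0,1]^{3}$ is a compact convex cube.

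For (iii), writing the state equations as $\dot x=f(t,x,u)$, each component of $f$ is a finite sum of terms bilinear in $(x,u)$; combining the uniform state bounds from the previous step with $0\le u_{i}\le 1$ yields a constant $C$ with $|f(t,x,u)|\le C(1+|x|+|u|)$, which is the required linear growth. For (iv), the integrand $L(x,u)=a_{1}I+a_{2}I_{1}+a_{3}I_{2}+a_{4}D+\frac{a_{5}}{2}u_{1}^{2}+\frac{a_{6}}{2}u_{2}^{2}+\frac{a_{7}}{2}u_{3}^{2}$ is affine in the state and a positive-definite quadratic in $u$ (since $a_{5},a_{6},a_{7}>0$), hence convex on $U$. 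For the coercivity (v), dropping the nonnegative state-dependent terms gives $L(x,u)\ge \frac{1}{2}\min(a_{5},a_{6},a_{7})\,|u|^{2}-c_{2}$, so $\rho=2>1$ works with $c_{1}=\frac{1}{2}\min(a_{5},a_{6},a_{7})>0$.

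With all five hypotheses in hand, the Fleming–Rishel theorem delivers an optimal control $u^{*}$ with corresponding optimal state $x^{*}$. The only genuinely delicate point is the first step — the uniform, control-independent boundedness of the trajectories on $[0,T]$ — but this is inherited directly from the invariant-region argument already carried out for the uncontrolled model, so no new difficulty arises. I would expect the verification of the linear-growth bound (iii) to be the most computation-heavy part, though it is entirely routine once the state bounds are fixed.
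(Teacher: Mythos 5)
Your proof follows essentially the same route as the paper: both invoke the Fleming--Rishel existence theorem and verify its hypotheses (convexity of the integrand in $u$, a priori boundedness of the states, and regularity/growth of the state system). Your version is simply a more detailed and careful verification of the same checklist that the paper states in three lines, so it is correct and consistent with the paper's argument.
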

\begin{proof}
The existence of an optimal control is guaranteed by Corollary 4.1 of  Fleming (\cite{Fleming} ) due the following
 \begin{enumerate}
\item the convexity of the integrand of J with respect to u;
\item  a priori boundedness of the state solutions;
\item  Lipschitz property of the state system with respect to the state variables.
 \end{enumerate}
 Since the functional is continuously differentiable in $t,x, u$, and with the bounded domains of the state $x$ and the control $u$, there exist an optimal control and state $(x^{*},u^{*})$ that minimize the functional \eqref{functionnal}. \\  
\end{proof}

\noindent The following theorem is a consequence of the maximum principle.
\begin{thm}
	Given an  optimal control $u{*}$ and corresponding state $x^{*}$ solutions to the problem \eqref{optprb}, then    there exist adjoint variable  $p$ such that $u{*}, x^{*}$ satisfy the Pontryagin's Maximum Principle.
	\label{thm2}
\end{thm}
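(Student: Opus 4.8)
The plan is to invoke Pontryagin's Maximum Principle directly, since the preceding existence theorem already guarantees an optimal pair $(x^*,u^*)$ and the state dynamics are smooth in $(t,x,u)$. First I would form the Hamiltonian associated with problem \eqref{optprb}. Writing the running cost as $L(x,u)=a_1 I+a_2 I_1+a_3 I_2+a_4 D+\tfrac{a_5}{2}u_1^2+\tfrac{a_6}{2}u_2^2+\tfrac{a_7}{2}u_3^2$ and denoting the right-hand side of the state system (either \eqref{controlmodel2} or \eqref{controlmodel3}) by $f(x,u)=(f_S,f_I,f_{I_1},f_{I_2},f_{R_1},f_{R_2},f_D)$, the Hamiltonian is
$$H(x,u,p)=L(x,u)+\langle p,f(x,u)\rangle=L(x,u)+\sum_{k=1}^{7} p_k\,f_k(x,u),$$
where $p=(p_1,\dots,p_7)$ is the adjoint vector conjugate to $x=(S,I,I_1,I_2,R_1,R_2,D)$.

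Second, I would write down the adjoint system from the canonical equations $\dot p=-\partial H/\partial x$ evaluated along the optimal trajectory. Because the state equations are polynomial—bilinear in $S$ and the infectious compartments through the mass-action term $\beta S(I+\epsilon(1-u_1)I_1+(1-u_1)I_2)$, and linear elsewhere—each partial derivative $\partial H/\partial x_k$ is an explicit affine combination of the $p_k$ whose coefficients depend on $(S^*,I^*,\dots)$ and on $u^*$. This yields seven ODEs that are linear in $p$. Since the cost \eqref{functionnal} carries no terminal payoff term and the terminal states are left free, the transversality conditions give $p_k(T)=0$ for every $k$.

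Third, I would derive the optimality condition. The maximum principle requires that $H$ be minimized over the admissible set $[0,1]^3$ along the optimum, so at interior points the stationarity conditions $\partial H/\partial u_j=0$ hold. Because $L$ is quadratic in each $u_j$ with positive weight $a_{4+j}$ while $f$ is affine in $u$, each equation $\partial H/\partial u_j=0$ is solvable in closed form for an unconstrained candidate $\tilde u_j$, a ratio whose numerator collects the relevant $p_k$ multiplied by the appropriate states. Projecting onto the control bounds via the clamping $u_j^*=\min\{1,\max\{0,\tilde u_j\}\}$ then produces the explicit characterization of $u^*$.

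The computation is essentially routine differentiation and bookkeeping; the only step demanding real care is the sign and grouping of the many cross terms in $\partial H/\partial S$ and $\partial H/\partial I$, since $S$ and $I$ feed several equations simultaneously, together with keeping the two models' differing placements of $u_2,u_3$ distinct throughout. Assembling the three ingredients—the Hamiltonian, the seven adjoint equations with $p(T)=0$, and the clamped control formulas—delivers exactly the Pontryagin conditions that Theorem \ref{thm2} asserts.
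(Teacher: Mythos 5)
Your proposal is correct and follows essentially the same route as the paper: form the Hamiltonian $H=L+\sum_k p_k g_k$, write the adjoint system $\dot p_k=-\partial H/\partial x_k$ with $p_k(T)=0$, and characterize the controls from $\partial H/\partial u_j=0$. Your explicit clamping of the unconstrained stationary point onto $[0,1]$ is in fact slightly more careful than the paper, which only records the interior stationarity conditions.
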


\begin{proof}
Since the problem \eqref{optprb} with the model \eqref{controlmodel3} generalize the problem with the model \eqref{controlmodel3}, we only perform a proof for the problem \eqref{optprb} with the model \eqref{controlmodel3}. \\
The theorem is a direct application of Pontryagin's maximum principle (\cite{Pontryagin}). Then the Hamiltonian of the problem \eqref{optprb} is given as follows:
\begin{multline}
H=a_{1} I(t)+a_{2} I_{1}(t)+a_{3} I_{2}(t)+a_{4} D(t)+\frac{a_{5}}{2} u^{2}_{1}(t)+\frac{a_{6}}{2} u^{2}_{2}(t)+\frac{a_{7}}{2} u^{2}_{3}(t)+ \displaystyle \sum_{i=1}^{n} p_{i} g_{i}
\end{multline}
where $g_{i},\ i=1\cdots 7$ denotes the right side of the differential equation of the $i$ \ the state variables, and $p=(p_{1}(t), \ p_{2} (t), \  p_{3} (t), \  p_{4} (t), \  p_{5} (t), \  p_{6} (t), \  p_{7}(t))$ the associated adjoints for the states $x$. Then, we obtain

\begin{multline}
H=a_{1} I(t)+a_{2} I_{1}(t)+a_{3} I_{2}(t)+a_{4} D(t)+\frac{a_{5}}{2} u^{2}_{1}(t)+\frac{a_{6}}{2} u^{2}_{2}(t)+\frac{a_{7}}{2} u^{2}_{3}(t)+\\
p_{7}(t).(d (1-u_{3}(t)) (I_{1}(t)+I_{2}(t)))+p_{3}(t).(\alpha_{1} I(t) (u_{2}(t)+u_{3}(t)+1)-I_{1}(t) (d (1-u_{3}(t))+\mu +\\
\theta_{1} (u_{2}(t)+u_{3}(t)+1)))+p_{4}(t).(\alpha_{2} I(t) (-u_{2}(t)-u_{3}(t)+1)-I_{2}(t) (d (1-u_{3}(t))+\mu +\theta_{2} (u_{2}(t)+u_{3}(t)+1)))+\\
p_{1}(t).(-\beta  S(t) (\epsilon  I_{1}(t) (1-u_{1}(t))+I_{2}(t) (1-u_{1}(t))+I(t))+\Lambda -\mu  S(t))+\\
p_{2}(t).(\beta  S(t) (\epsilon  I_{1}(t) (1-u_{1}(t))+I_{2}(t) (1-u_{1}(t))+I(t))-I(t) (\mu +\alpha_{1} (u_{2}(t)+u_{3}(t)+1))-\\
\alpha_{2} I(t) (-u_{2}(t)-u_{3}(t)+1))+p_{5}(t).(\theta_{1} I_{1}(t) (u_{2}(t)+u_{3}(t)+1)-\mu  R_{1}(t))+\\
p_{6}(t).(\theta_{2} I_{2}(t) (u_{2}(t)+u_{3}(t)+1)-\mu  R_{2}(t))
\end{multline}
\noindent Therefore we can derive the following:
\begin{equation*}
\dot p_{1}= - \dfrac{ \partial H}{\partial S}\,,  \dot p_{2}= - \dfrac{ \partial H}{\partial I}\,, \dot p_{3}= - \dfrac{ \partial H}{\partial I_{1}}\, \ , 
\end{equation*}

\begin{equation*}
\dot p_{4}= - \dfrac{ \partial H}{\partial I_{2}}\, \ , \dot p_{5}= - \dfrac{ \partial H}{\partial R_{1}}\, \ , \dot p_{6}= - \dfrac{ \partial H}{\partial R_{2}}\,\ , \dot p_{7}= - \dfrac{ \partial H}{\partial D}\,
\end{equation*}
with \, $p_{i}(T)=0 , for \ i=1,\ 2,\ 3,\ 4,\ 5,\ 6,\ 7$. evaluated at the optimal controls and the corresponding states, which results in adjoint system of theorem (\ref{thm2}). The Hamiltonian $H$ is minimized with respect to the controls at the optimal controls; therefore, we differentiate $H$ with respect to $u_{1}$ , $u_{2} $, and $u_{3}$ on the set  $\Gamma$ , respectively, thereby obtaining the following optimality conditions:

\begin{align*}
\dfrac{ \partial H}{\partial u_{1}}  =0,\ 
\dfrac{ \partial H}{\partial u_{2}}  =0,\ 
\dfrac{ \partial H}{\partial u_{3}}  =0
\end{align*}

Solving for $u_{1}^{*}$ , $u_{2}^{*} $, and $u_{3}^{*}$, we obtain 

\begin{equation*}
\begin{array}{lll}
u_{1}^ {*}(t) &=& \displaystyle \frac{1}{a_{5}}(-p_{1}(t).(-\beta  S(t) (-\epsilon  I_{1}(t)-I_{2}(t)))-p_{2}(t).(\beta  S(t) (-\epsilon  I_{1}(t)-I_{2}(t))))\,\\
\\
u_{2}^ {*} (t)&=& \displaystyle \frac{1}{a_{6}}(-p_{3}(t).(\alpha_{1} I(t)-\theta_{1} I_{1}(t))-p_{5}(t).(\theta_{1} I_{1}(t))-p_{4}(t).(-\theta_{2} I_{2}(t)-\alpha_{2} I(t))\\
&&\qquad\qquad\qquad\qquad\qquad\qquad\qquad - p_{6}(t).(\theta_{2} I_{2}(t))-p_{2}(t).(\alpha_{2} I(t)-\alpha_{1} I(t)))\,\\
\\
u_{3}^ {*} (t)&=&   \displaystyle \frac{1}{a_{7}}(-p_{7}(t).(-d (I_{1}(t)+I_{2}(t)))-p_{3}(t).(\alpha_{1} I(t)-(\theta_{1}-d) I_{1}(t))\\
&&\qquad\qquad\qquad\qquad\qquad\qquad\qquad -p_{4}(t).(-(\theta_{2}-d) I_{2}(t)-\alpha_{2} I(t))-p_{5}(t).(\theta_{1} I_{1}(t))\\
&&\qquad\qquad\qquad\qquad\qquad -p_{6}(t).(\theta_{2} I_{2}(t))-p_{2}(t).(\alpha_{2} I(t)-\alpha_{1} I(t)))\\
\end{array}
\end{equation*}
This end the proof
\end{proof}

\section{ Numerical simulation of optimal controls}
\label{sec:numsim}
In this section, we show numerical simulation of optimal controls problem. We use ACADO an optimal control solver tool to solve the problems. ACADO solver use direct method that is it starts by discretizing the problem to get a non linear problem (NLP) and at the end solve the NLP problem. The parameters of the model are estimated by fitting data of the cumulative cases of Senegal country. That method of fitting cumulative data cases has been presented in \cite{Magal}, \cite{Balde:2020} and \cite{Balde2:2020}. Details is given in the appendix section. The values of the parameters are:
$\Lambda=0.0914 N/100$, with $N=16743927$ the total population of Senegal. 
$\beta = 1.04756\cdot 10^{-8}$, $\mu = 0.000219$,
$\alpha_{1}= 0.110064$, $\alpha_{2} = (1 - f) \alpha_{1}/f$, with $f = 0.8$;  $\theta_{1} = 1/7,\ \theta_{2} = 1/7,\  d = 0.00194523,\  \epsilon = 0.02$. \\      
The initial conditions are: $t_{0}=0.0166363,\ I_{0} = 190.612,\ I_{10} = 98.3121,\  I_{20} = 24.578,\ S_{0} = N - I_{0},\  R_{10} = 0,\ R_{20} = 0,\  D_{0} = 0$.\\
We constructed and use the following functional $\displaystyle J(x,u)=c k(I(t)+I_{1}(t)+I_{2}(t)+D(t))+\frac{1}{2}(u_{1}^{2}+u_{2}^{2}+u_{3}^{2})$. With $c = 4.9\cdot 10^{-5}$ and $k=1$.\\ 
\noindent We consider different strategies:
\begin{enumerate}
	\item We solve the optimal control problem \eqref{optprb} with the model \eqref{controlmodel2}. 
	\item We solve the optimal control problem \eqref{optprb} with the model \eqref{controlmodel3}.
\end{enumerate} 
The functional are constructed based on the general one \eqref{functionnal}. See Section \ref{app} for more details.   \\
\noindent The figures \ref{fig:states-nc-sc1} show results of the model \eqref{model} where we do not consider controls. 
The figures \ref{fig:states-sc2} and \ref{fig:controls-sc2} show the results related to the strategy 1, while the figures \ref{fig:states-sc3} and \ref{fig:controls-sc3} show the results related to the strategy 2.  

\begin{figure}[H]
	\centering
	\subfloat[\label{fig:states-nc-all-sc1}]{\includegraphics[width=0.45\linewidth]{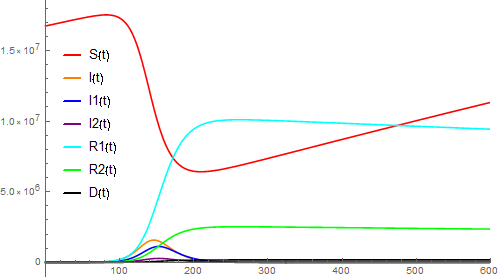}}\qquad\qquad 
	\subfloat[\label{fig:states-nc-all-sc1-1-e6}]{\includegraphics[width=0.45\linewidth]{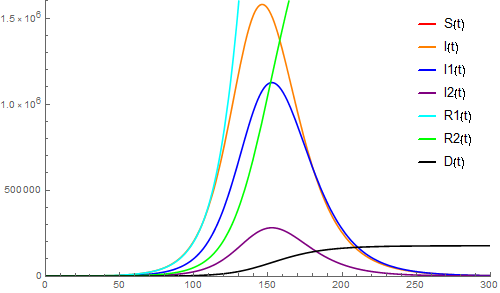}}\qquad\qquad
	\subfloat[\label{fig:states-nc-all-sc1-5-e5}]{\includegraphics[width=0.45\linewidth]{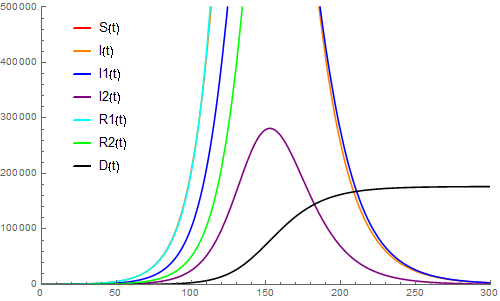}}\qquad\qquad
	\subfloat[\label{fig:states-nc-all-sc1-2-e5}]{\includegraphics[width=0.45\linewidth]{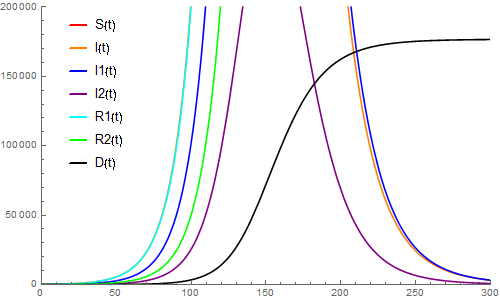}}\qquad\qquad 
	\caption{Plot of the differential equation \eqref{model}. There is no controls strategies.}
	\label{fig:states-nc-sc1}	
\end{figure}

\begin{figure}[H]
	\centering
	\subfloat[\label{fig:u1-sc2}]{\includegraphics[width=0.45\linewidth]{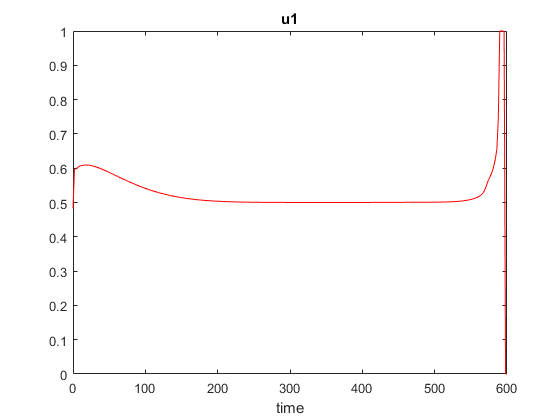}}\qquad\qquad 
	\subfloat[\label{fig:u2-sc2}]{\includegraphics[width=0.45\linewidth]{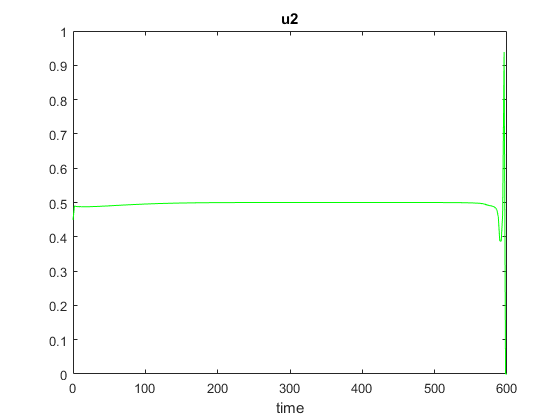}}\qquad\qquad
	\subfloat[\label{fig:u3-sc2}]{\includegraphics[width=0.45\linewidth]{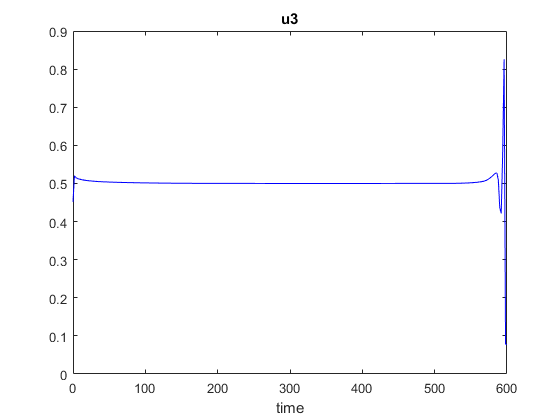}}\qquad\qquad 
	\subfloat[\label{fig:u1u2u3-sc2}]{\includegraphics[width=0.45\linewidth]{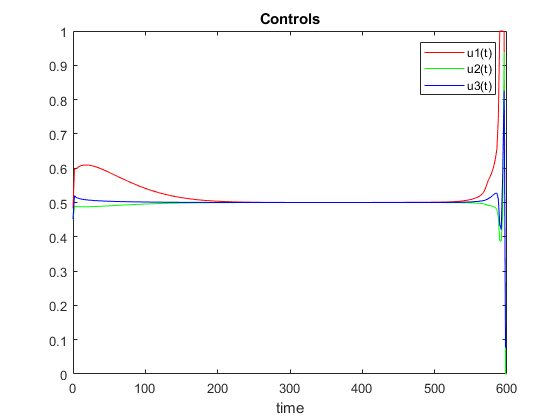}}\qquad\qquad
	\caption{Plot of controls of the first strategy. }
	\label{fig:controls-sc2}	
\end{figure}
\begin{figure}[H]
	\centering
	\subfloat[\label{fig:states-all-sc2}]{\includegraphics[width=0.45\linewidth]{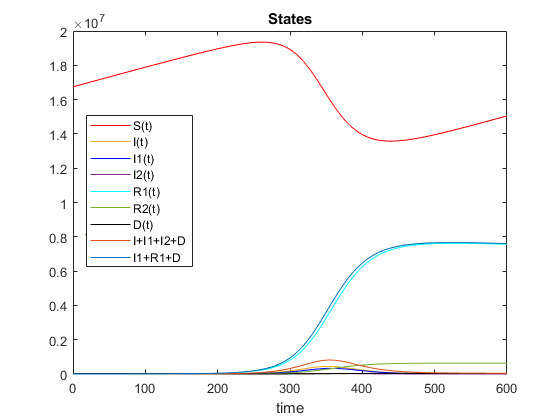}}\qquad\qquad 
	\subfloat[\label{fig:states-all-sc2-9-e5}]{\includegraphics[width=0.45\linewidth]{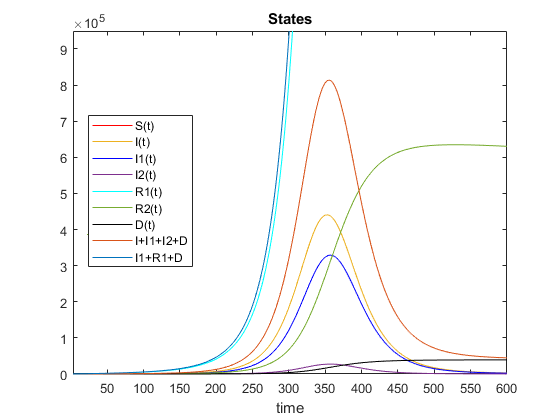}}\qquad\qquad 
	\subfloat[\label{fig:states-all-sc2-3-e5}]{\includegraphics[width=0.45\linewidth]{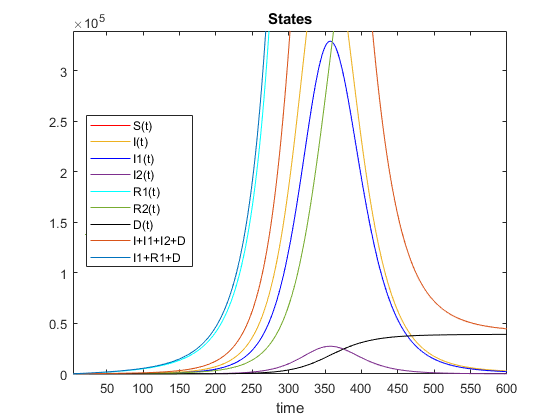}}\qquad\qquad
	\subfloat[\label{fig:states-all-sc2-5-e4}]{\includegraphics[width=0.45\linewidth]{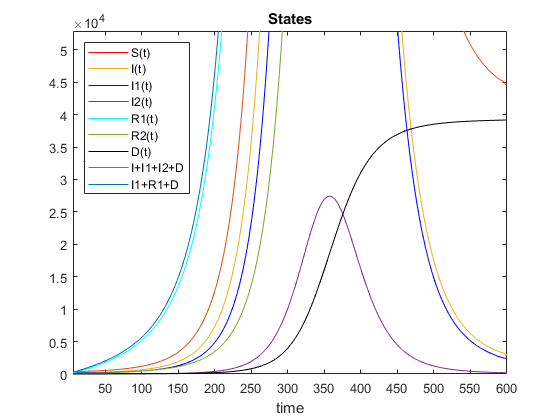}}\qquad\qquad
	\caption{Plot of states of the first strategy. }
	\label{fig:states-sc2}	
\end{figure}

\begin{figure}[H]
	\centering
	\subfloat[\label{fig:u1-sc3}]{\includegraphics[width=0.45\linewidth]{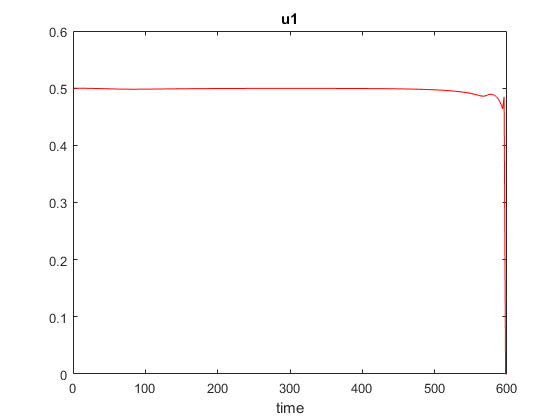}}\qquad\qquad 
	\subfloat[\label{fig:u2-sc3}]{\includegraphics[width=0.45\linewidth]{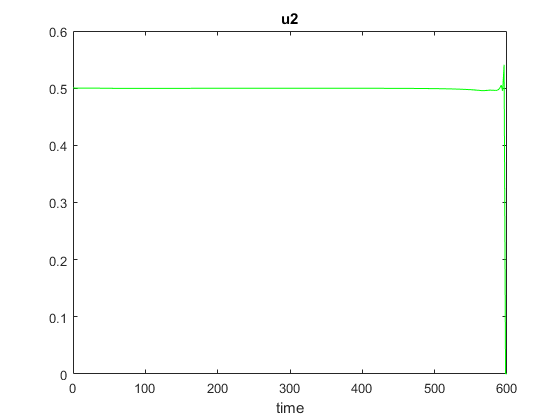}}\qquad\qquad
	\subfloat[\label{fig:u3-sc3}]{\includegraphics[width=0.45\linewidth]{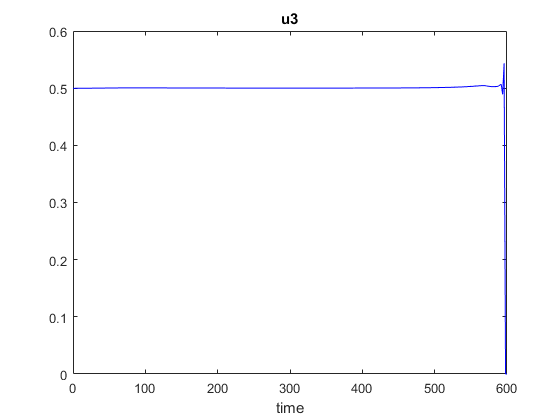}}\qquad\qquad 
	\subfloat[\label{fig:u1u2u3-sc3}]{\includegraphics[width=0.45\linewidth]{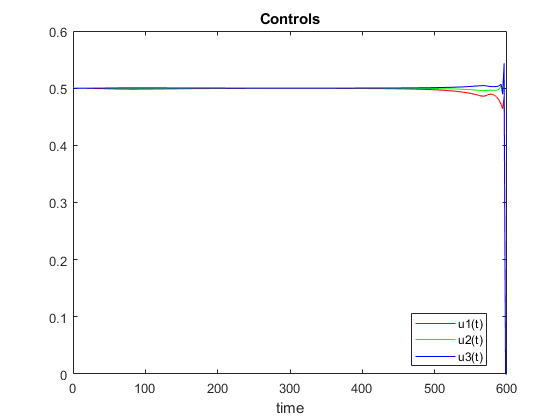}}\qquad\qquad
	\caption{Plot of controls of the second strategy. }
	\label{fig:controls-sc3}	
\end{figure}
\begin{figure}[H]
	\centering
	\subfloat[\label{fig:states-all-sc3}]{\includegraphics[width=0.45\linewidth]{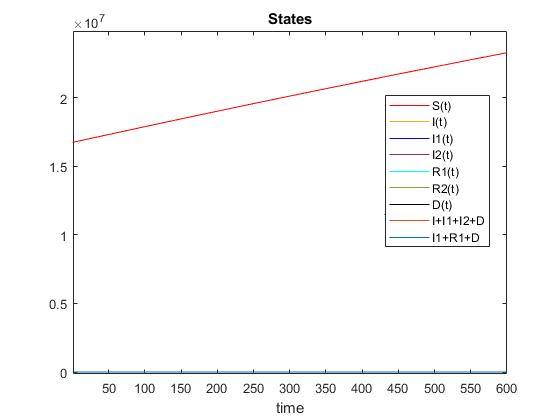}}\qquad\qquad 
	\subfloat[\label{fig:states-all-sc3-1-e3}]{\includegraphics[width=0.45\linewidth]{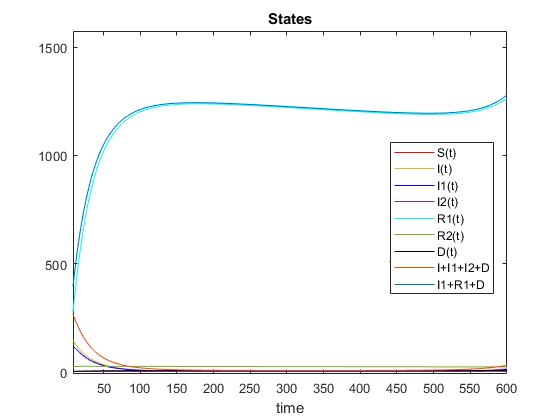}}\qquad\qquad
	\subfloat[\label{fig:states-all-sc3-3-e2}]{\includegraphics[width=0.45\linewidth]{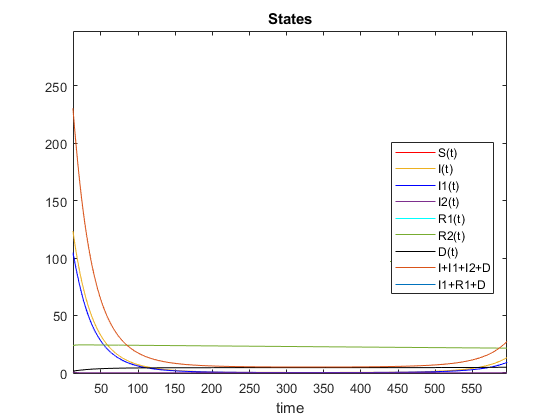}}\qquad\qquad 
	\caption{Plot of states of the second strategy. }
	\label{fig:states-sc3}	
\end{figure}

\begin{figure}[H]
	\centering
	\subfloat[\label{fig:CompInf}]{\includegraphics[width=0.45\linewidth]{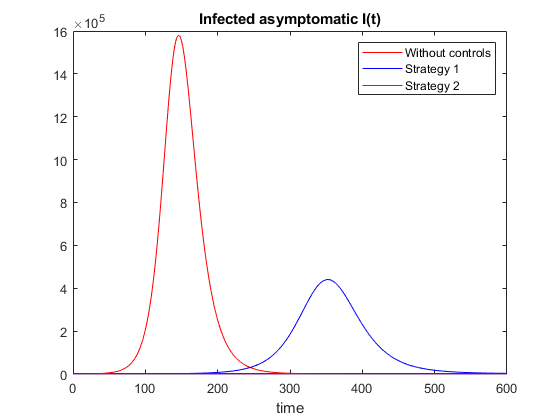}}\qquad\qquad 
	\subfloat[\label{fig:CompInfReport}]{\includegraphics[width=0.45\linewidth]{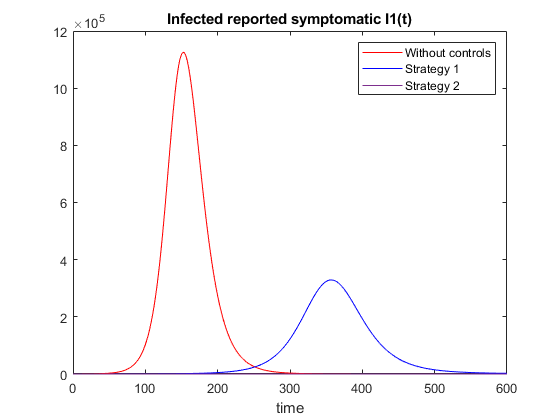}}\qquad\qquad 
	\subfloat[\label{fig:CompInfUnreport}]{\includegraphics[width=0.45\linewidth]{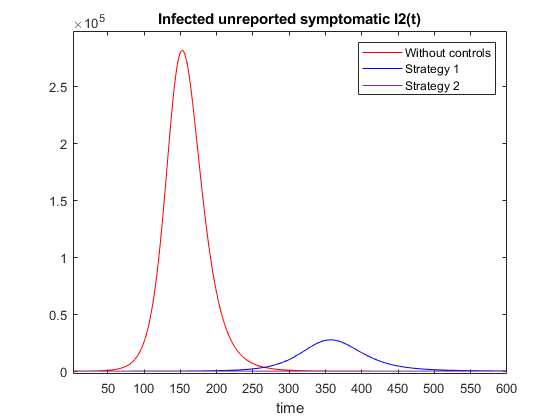}}\qquad\qquad
	\subfloat[\label{fig:CompDeath}]{\includegraphics[width=0.45\linewidth]{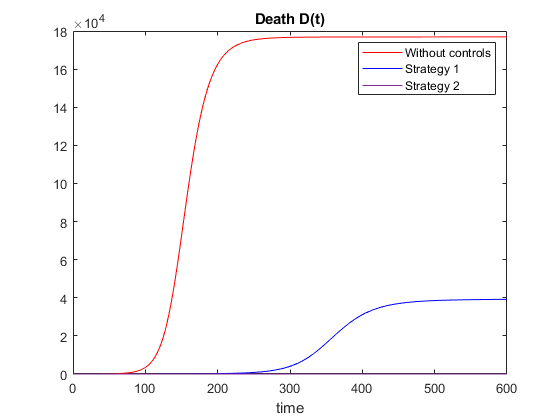}}\qquad\qquad 
	\subfloat[\label{fig:CompInfallandDeath}]{\includegraphics[width=0.45\linewidth]{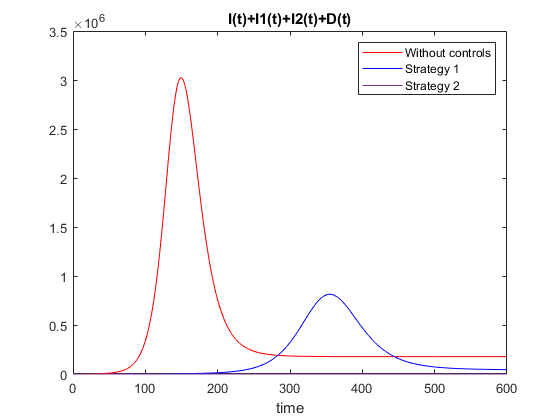}}\qquad\qquad
	\subfloat[\label{fig:CompCumul}]{\includegraphics[width=0.45\linewidth]{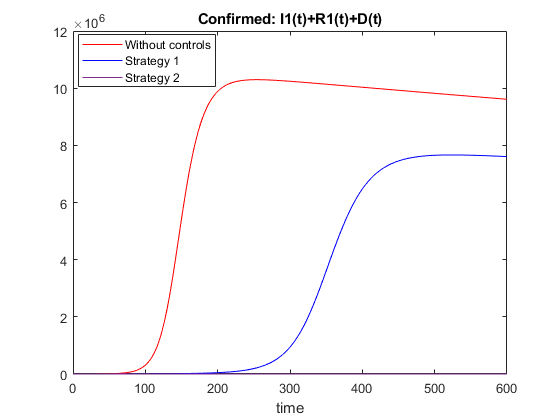}}\qquad\qquad
	\caption{Comparative plot of states with and without controls.}
	\label{fig:Comparative}	
	
\end{figure}

\begin{figure}[H]
	\centering
	\subfloat[\label{fig:CompInf-2-e3}]{\includegraphics[width=0.45\linewidth]{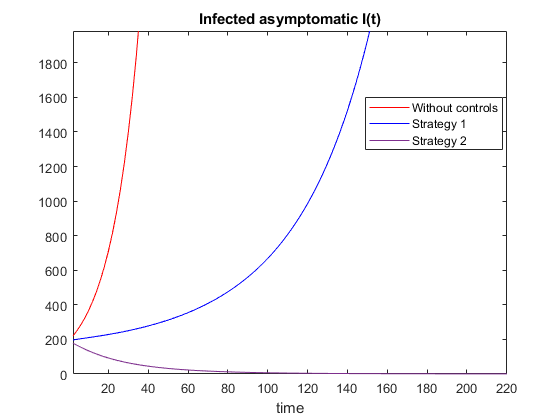}}\qquad\qquad 
	\subfloat[\label{fig:CompInfReport-1-e3}]{\includegraphics[width=0.45\linewidth]{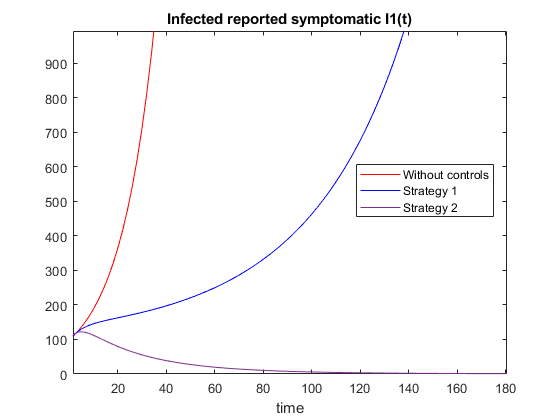}}\qquad\qquad 
	\subfloat[\label{fig:CompInfUnreport-9-e2}]{\includegraphics[width=0.45\linewidth]{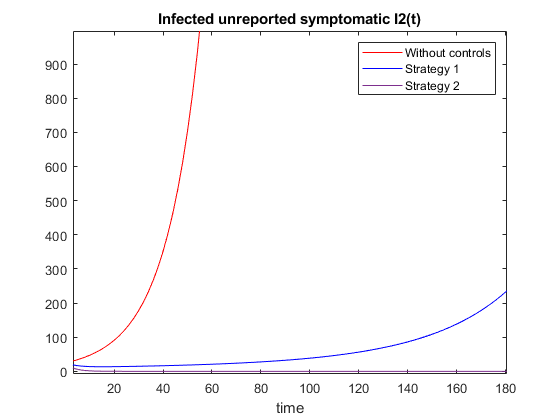}}\qquad\qquad
	\subfloat[\label{fig:CompDeath-9-e1}]{\includegraphics[width=0.45\linewidth]{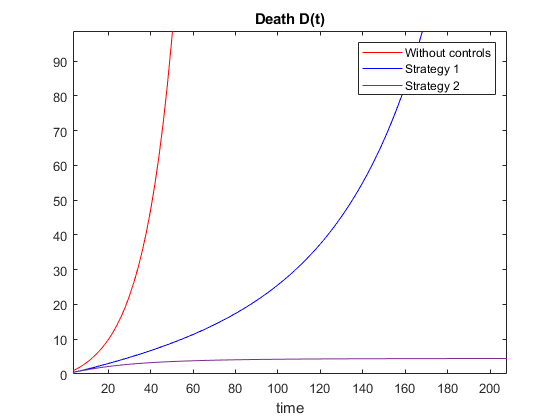}}\qquad\qquad 
	\subfloat[\label{fig:CompInfallandDeath-2-e3}]{\includegraphics[width=0.45\linewidth]{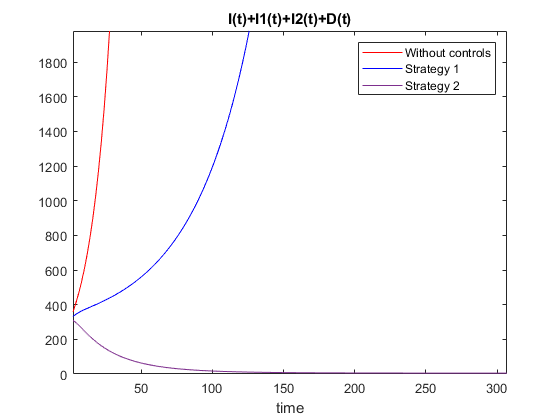}}\qquad\qquad
	\subfloat[\label{fig:CompCumul-2-e4}]{\includegraphics[width=0.45\linewidth]{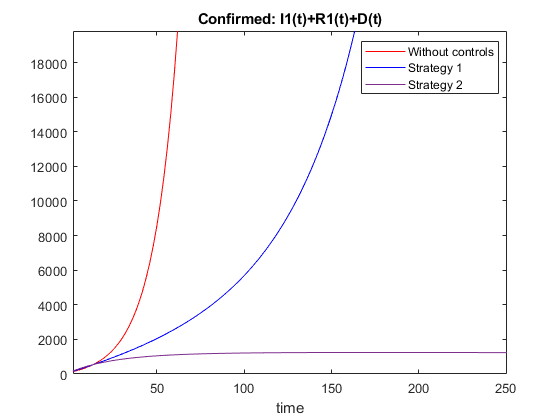}}\qquad\qquad
	\caption{Zoom of comparative plot of states with and without controls showing the effect of optimal controls strategies on the infected and death population.}
	\label{fig:Comparativezoom}	
\end{figure}

\section{Discussion}
\label{sec:discu}
\noindent The functional we minimize has two parts. The first part is composed of infected and death terms, while the second part is composed of the control terms. The size scale of these two parts is very different. Hence the choice of the coefficients $a_{1},\cdots,a_{7}$, as balance, has a great influence on the result. \\
We construct a function with an economic sens by using some data. More details on the construction of that function can be found in the appendix. \\

\noindent We see in the figures \ref{fig:states-nc-sc1}, \ref{fig:states-sc2} and \ref{fig:states-sc3}, that for all the two strategies the infected asymptomatic individuals $I$, the infected reported individuals $I_{1}$ and the infected unreported individuals $I_{2}$ are reduced compared to the case without controls. Also, the death cases are reduced with controls in comparison to the situation without controls.\\

\noindent The second strategy is better than the first one. Indeed, in the second strategy, there is no epidemic.\\

\noindent The results in the figures \ref{fig:Comparative} and \ref{fig:Comparativezoom} show a difference in the number of infected (Reported/ Unreported) individuals without controls compared to the number with optimal strategies. Due to the control strategies $1$ and/or $2$, the number of infected individuals (Reported/ Unreported) decreases and reaches the turning point of the asymptomatic infectious cases later than without controls. At the same time, optimal strategies reduced the maximal number of infected (Reported/ Unreported) people compared to the case without controls.
In other words, the maximal value of the peak decreases, and the time of the peak is postponed by applying controls. 
\section{Material and Methods}
\label{app}
\subsection{Estimation of parameters}
\label{estima}
The estimation of the parameters of the model \eqref{model} is done by using techniques in \cite{Magal}, \cite{Balde:2020} and \cite{Balde2:2020}. We fit the cumulative data with an exponential function $TNI(t)=b\exp(c t)-a$. In addition, we assume that the cumulative function can be given in integral form as $TNI(t)=\displaystyle\alpha_{1}\int_{t_{0}}^{t}I(s)ds+TNI_{0}$.\\ Then $TNI(t_{0})=TNI_{0}=b\exp(c t_{0})-a$. Thus, we obtain $\displaystyle t_{0}=\displaystyle\frac{\ln(TNI_{0}+a)-\ln(b)}{c}$.\\
Also, we have:
\begin{equation}
I(t)=\dot{TNI}(t)=bc\exp(c t).
\end{equation} 
Then $\displaystyle I(t_{0})=\frac{bc}{\alpha_{1}}\exp(c t_{0})=\frac{c}{\alpha_{1}}(TNI_{0}+a)=I_{0}$ and $\displaystyle\frac{I(t)}{I(t_{0})}=\exp(c(t-t_{0}))$. 
Hence, we obtain
\begin{equation}
I(t)=I(t_{0})\exp(c(t-t_{0})),
\end{equation}
then $\dot{I}(t)=cI(t)$ and $\dot{I}(t_{0})=cI(t_{0})$.\\ 
Let's set $\delta_{1}$ and $\delta_{2}$ such that $I_{1}=\delta_{1} I$ and $I_{2}=\delta_{2} I$. 
Then replacing in the second an third equation of the following system: 
\begin{equation}
\label{modelinf}
	\left\{\begin{array}{lcl}
	\dot I &=&\beta \,S\,(I+\epsilon I_{1}+I_{2})-(\mu+\alpha_1+\alpha_2)\,I,    \\
	\dot I_1&=& \alpha_1 \,I - (\mu+d+\theta_1)\,I_1,\\
	\dot I_2 &=&  \alpha_2 \,I - (\mu+d+\theta_2)\,I_2, \\
	\end{array}\right.
\end{equation}
 we obtain 
 \begin{align}
 	\delta_{1}=\frac{\alpha_1}{c+\mu+d+\theta_1}=\frac{I_{10}}{I_{0}}&\\
 	\delta_{2}=\frac{\alpha_2}{c+\mu+d+\theta_2}=\frac{I_{20}}{I_{0}}&
 \label{delta}
 \end{align}
Then introducing \eqref{delta} in the first equation of \eqref{modelinf}, we obtain:
\begin{equation*}
	c+\mu+\alpha_1+\alpha_2=\beta \,S_{0}\,(1+\epsilon \delta_{1}+\delta_{2})
\end{equation*}  
Hence 
 \begin{align}
 \label{beta}
\beta=\frac{c+\mu+\alpha_1+\alpha_2}{S_{0}\,(1+\epsilon \delta_{1}+\delta_{2})}
\end{align}
Replacing \eqref{delta} in \eqref{beta}, we obtain:
 \begin{align}
\beta=\frac{(c+\mu+\alpha_1+\alpha_2)(c+\mu+d+\theta_1)(c+\mu+d+\theta_2)}{S_{0}\,((c+\mu+d+\theta_1)(c+\mu+d+\theta_2)+\epsilon \alpha_{1}(c+\mu+d+\theta_1)+\alpha_{2}(c+\mu+d+\theta_2))}
\end{align}
To estimate the death rate, we have:
 \begin{align}
D_{1}(t)=&\int_{t_{0}}^{t} d I_{1}(s)ds=\int_{t_{0}}^{t} d \delta_{1}I(s)ds\\
=&\frac{d}{c+\mu+d+\theta_1}(b \exp(c t)-a-TNI_{0}).
\end{align}
With $D_{1}$, the death from reported cases.\\

\noindent We consider that $80\%$ of cases can be detected. Then $f=0.8$ and $\alpha_{2}=\displaystyle\frac{1-f}{f}\alpha_{1}$, with $\alpha_{1}$ estimated above. We set the infectious period to medical values $1/7$ for all infected reported and unreported. The pandemic death rate $d$ is estimated by using reported death data. We consider the same value for death from unreported cases.  \\
For the birth rate, we use $32.9\% $ of year $2018$, from \url{https://fr.wikipedia.org/wiki/Démographie_du_Sénégal}. Then the recruitment is $\Lambda=32.9\% N/365$ by day. The death rate is $7.9\% $ by year at $2018$.    
\subsection{Construction of the  
functional}
\label{func}
In order to have a functional with economic sens, we consider what follows:
\begin{itemize}
	\item We lost money when people are infected or death. Let's note that cost $a$ by individual and by day, associated to $I$, $I_{1}$, $I_{2}$ and , $D$. Then $a=a_{1}=a_{2}=a_{3}=a_{4}$.
	\item We lost money when we perform test on susceptible individuals. Let's note that cost $a_{6}$ by day, associated to controls $u_{2}$.  
	\item We spend money to provide treatment. Let's note that cost $a_{7}$ by day, associated to the control $u_{3}$.
	\item We spend money to carry out health campaigns and education. Let's note that cost $a_{5}$ by day, associated to the control $u_{1}$. 
\end{itemize}
Then the functional \eqref{functionnal} become: $\displaystyle J(u_{1},u_{2},u_{3})=a(I(t)+I_{1}(t)+I_{2}(t)+D(t))+a_{5}\frac{u_{1}^{2}}{2}+a_{6} \frac{u_{2}^{2}}{2}+a_{7} \frac{u_{3}^{2}}{2}$. We consider that $J(u_{1},u_{2},u_{3})\leq C$, with $C$ a maximal expense. Then we can rewrite the functional with proportion coefficients $\displaystyle a^{'}=\displaystyle \frac{a}{C}, \ a_{5}^{'}=\displaystyle \frac{a_{5}}{C}, \ a_{6}^{'}=\displaystyle \frac{a_{6}}{C}, \ a_{7}^{'}=\displaystyle \frac{a_{7}}{C}$:
$$
J(u_{1},u_{2},u_{3})=a^{'}(I(t)+I_{1}(t)+I_{2}(t)+D(t))+a_{5}^{'} \frac{u_{1}^{2}}{2}+a_{6}^{'} \frac{u_{2}^{2}}{2}+a_{7}^{'}\frac{u_{2}^{2}}{2}\leq 1.
$$ 
To characterize the loss of money due to infected and death individuals, we use the GDP per capita. The Gross domestic product (GDP) per capita is an indicator of the level of economic activity. It is the value of GDP divided by the number of inhabitants of a country. This indicator is sometimes used to roughly measure per capita income. See \url{https://fr.wikipedia.org/wiki/Produit_intérieur_brut_par_habitant}.\\
Now considering the $2018$ GDP per capita and per day of the Senegal country evaluated to $2456.334$ $ FCFA$ calculated from $1522\ \$ $ per capita, per year. The data come from \url{https://www.populationdata.net/pays/senegal/}. We set $a=2456.334$. \\
A COVID-19 test in Senegal country is evaluated to $50000\ FCFA$ by individual. If we fix a number of test to perform at $1000$ by day, then we have $a_{6}=50000000$. We see that $a=a_{6}\cdot 4.91267\cdot 10^{-5}$. We set $c=4.91267\cdot 10^{-5}$. We choose to fix $a_{5},\ a_{7}$ to the same value of $a_{6}$. Then the functional becomes:
   $$
   J(u_{1},u_{2},u_{3})=ca_{6}^{'}(I(t)+I_{1}(t)+I_{2}(t)+D(t))+a_{5}^{'}\frac{u_{1}^{2}}{2}+a_{6}^{'}\frac{u_{2}^{2}}{2}+a_{7}^{'}\frac{u_{2}^{2}}{2} \leq 1.
   $$ 
Thus considering that the costs are proportional to there respective control, we write:
\begin{align*}
J(u_{1},u_{2},u_{3})=&c\frac{u_{2}}{2}(I(t)+I_{1}(t)+I_{2}(t)+D(t))+\frac{u_{1}^{2}}{2}+\frac{u_{2}^{2}}{2}+\frac{u_{3}^{2}}{2}\\
\leq& \frac{c}{2}(I(t)+I_{1}(t)+I_{2}(t)+D(t))+\frac{u_{1}^{2}}{2}+\frac{u_{2}^{2}}{2}+\frac{u_{3}^{2}}{2}.
\end{align*}
        
\noindent Finally to generalize the functional we obtain:
$$
J(u_{1},u_{2},u_{3})=ck(I(t)+I_{1}(t)+I_{2}(t)+D(t))+\frac{u_{1}^{2}}{2}+\frac{u_{2}^{2}}{2}+\frac{u_{3}^{2}}{2}.
$$     
\section{Conclusion and perspectives}
\label{cp}
In this work, we solve optimal control problems. A new epidemic model, with confirmed contamination, has been presented. We use distancing, case finding, and case holding controls to reduce the spread of the epidemic. We mathematically analyze the model and estimates the parameters used to solve the optimal control problems. In this particular research, the trend of population dynamics is important. It can be easily seen that by increasing educational campaigns, disease tests, and financial support ensure drug for infected individuals, we can successfully decrease the number of infected and death. \\
In further work, we intend to use models with additional compartments as quarantine and treatment.  

%


\end{document}